\DeclareMathOperator{\range}{\mathcal{R}}
\DeclareMathOperator{\trace}{\text{tr}}
\newcommand{\covpred}{\Sigma_{\scriptscriptstyle{\Delta}}}
\newcommand{\covphi}{\Sigma_{\scriptscriptstyle{\varphi}}}
\newcommand{\what}[1]{\widehat{#1}}
\newcommand{\real}{\mathbb{R}}
\newcommand{\yh}{\hat{\textrm{\bfseries y}}}
\newcommand{\That}{\what{\Theta}}
\newcommand{\zinit}{\textrm{\bfseries z}}
\newcommand{\yf}{{\textrm{\bfseries y}}}
\newcommand{\uf}{\textrm{\bfseries u}}
\newcommand{\wtilde}[1]{\widetilde{#1}}
\newcommand{\wbar}[1]{\overline{#1}}
\newmdtheoremenv{theorem}{Theorem}
\newtheorem{assumption}{Assumption}
\newtheorem{remark}{Remark}
\newtheorem{corollary}{Corollary}
\newtheorem{lemma}{Lemma}
\newcommand\footnoteref[1]{\protected@xdef\@thefnmark{\ref{#1}}\@footnotemark}
\def\BibTeX{{\rm B\kern-.05em{\sc i\kern-.025em b}\kern-.08em
    T\kern-.1667em\lower.7ex\hbox{E}\kern-.125emX}}
\begin{document}
\title{On the equivalence of direct and indirect data-driven predictive control approaches}

\author{Per Mattsson, Fabio Bonassi, Valentina Breschi and  Thomas B. Sch{\"o}n\thanks{This work is partly sponsored by \emph{Kjell och M{\"a}rta Beijer Foundation} and the \emph{Swedish Research Council (VR)} under contracts: 2016-06079, 2023-04546.}
\thanks{P. Mattsson, F. Bonassi and T. B. Sch{\"o}n are with Uppsala University, Uppsala, Sweden (e-mail: per.mattsson@it.uu.se, fabio.bonassi@it.uu.se, thomas.schon@it.uu.se)}
\thanks{V. Breschi is with Eindhoven University of Technology, Eindhoven, The Netherlands (e-mail: v.breschi@tue.nl)}
}

\maketitle
\thispagestyle{empty}
\begin{abstract}
Recently, several direct Data-Driven Predictive Control (DDPC) methods have been proposed, advocating the possibility of designing predictive controllers from historical input-output trajectories without the need to identify a model. In this work, we show their equivalence to a (relaxed) indirect approach, allowing us to reformulate direct methods in terms of estimated parameters and covariance matrices. This allows us to provide further insights into how these direct predictive control methods work, showing that, for unconstrained problems, the direct methods are equivalent to subspace predictive control with a reduced weight on the tracking cost, and analyzing the impact of the data length on tuning strategies. 
Via a numerical experiment, we also illustrate why the performance of direct DDPC methods with fixed regularization tends to degrade as the number of training samples increases.

\end{abstract}

\begin{IEEEkeywords}
Data-driven control, predictive control, subspace predictive control.
\end{IEEEkeywords}

\DraftwatermarkOptions{%
 angle=0,
 hpos=107.5mm,
 vpos=270.5mm,
 fontsize=2.8mm,
 color={[gray]{0.2}},
 text={
   \parbox{1\textwidth}{{\textcopyright} 2024 IEEE.  Personal use of this material is permitted.  Permission from IEEE must be obtained for all other uses, in any current or future media, including reprinting/republishing this material for advertising or promotional purposes, creating new collective works, for resale or redistribution to servers or lists, or reuse of any copyrighted component of this work in other works. \textbf{Please cite the published article instead of this manuscript.  DOI:} \href{https://doi.org/10.1109/LCSYS.2024.3403473}{10.1109/LCSYS.2024.3403473}
   }}}

\section{Introduction}
\label{sec:introduction}

\IEEEPARstart{I}{n} recent years there has been an increasing interest in data-driven control in general, and Data-Driven Predictive Control (DDPC) in particular  \cite{dorfler2023data1}. In the realm of predictive control, the interest in direct data-driven solutions has remained practically dormant from the advent of Subspace Predictive Control (SPC) \cite{favoreel1999spc} until the two seminal works in \cite{coulson2019data,berberich2020data}. Since the latter approaches rely on data-driven predictors 
derived in a noise-free setting, 
regularization terms are added to the standard predictive cost to handle noise in the data.  
On the one hand, this growing interest in DDPC has prompted research to modify the schemes in \cite{coulson2019data,berberich2020data} in different directions, e.g., taking nonlinearities into account directly as in \cite{berberich2020trajectory}, using second-order moments to average over noisy data~\cite{ferizbegovic2021willems}, exploiting LQ-factorization as in $\gamma$-DDPC \cite{breschi2023data} or enforcing causality as in~\cite{sader2023causality}.

Concurrently, this has stimulated a debate on the differences, benefits, and drawbacks of \emph{direct} methods versus \emph{indirect} methods. In the former, a controller is designed directly from data, whereas in the latter, a parametric model is first estimated and then used for control design (see, e.g., \cite{dorfler2023data2, krishnan2021direct, dorfler2022bridging}). It is worth pointing out that the line between direct and indirect methods is (and has always been) blurred. For example, SPC is traditionally regarded as a direct approach as it only inherits the projection steps of subspace identification methods \cite{hou2013classification}. At the same time, in the recent literature, it is often labeled as an indirect approach as it exploits a predictor that directly stems from identification arguments (see, e.g., \cite{Verheijen2023review,dorfler2022bridging}). Nonetheless, recent works have shown the tight connection between DeePC and SPC. Indeed, DeePC and SPC are proven equivalent for noise-free LTI systems in \cite{fiedler2021relationship}. When the measured outputs contain noise, \cite{mattsson2023regularization} shows that the regularization in DeePC penalizes deviations from the SPC predictor. Expressions for the implicit predictor in DeePC are derived in \cite{kladtke2023implicit}, again showing a close relation to the SPC predictor. Instrumental Variables (IV) are introduced into the DeePC method in \cite{van2022data, wang2023data}. These IVs remove the need for extra regularization, and it is seen that the resulting direct DDPC is equivalent to an indirect IV-SPC method. Hence, when there is no noise, or instrumental variables are used instead of regularization to handle noise, direct DDPC methods are equivalent to SPC methods.   

In this contribution, we show that several direct DDPC methods are equivalent to an indirect method with a slack variable in the predictor. {This equivalence holds for finite data regimes \emph{without any assumptions} on the available training data (e.g. persistent exciting inputs or noise distributions).
The derived indirect reformulation can be used to explain, in a structured way, the role played by the length of the training dataset on the performance of direct DDPC methods with fixed regularization weights.} At the same time, such a reformulation sheds further light on how and why direct DDPC approaches work, allowing us to analyze potential drawbacks
due to the implicit presence of the predictor slack. This is exemplified by noting that, for unconstrained problems with 
quadratic cost, most existing direct DDPC approaches are equivalent to SPC with a reduced weight on the tracking cost.

The remainder of the paper is organized as follows. Section~\ref{sec:problem} introduces the predictive control problem, while Section~\ref{sec:background} gives a background on the considered data-driven methods. The equivalence between the direct and indirect methods is presented in Section~\ref{sec:result}, with implications discussed in Section~\ref{sec:implications} and numerically illustrated in Section~\ref{sec:example}. We conclude the paper in Section~\ref{sec:conclusions}.

\subsection{Notation}
For a vector $x$ we define the $\ell_2$-norm as $\| x\|_2^2 = x^\top x$, and the weighted $\ell_2$-norm as $\| x \|_W^2 = x^\top W x$. For a matrix $M$ we let $\range(M)$ denote the range space of the matrix, $M^\dagger$ its Moore-Penrose pseudo-inverse, $\Pi_{M} = M^\dagger M$ the orthogonal projector onto $\range(M^\top)$, so $I - \Pi_M$ is the orthogonal projector onto the null space of $M$. 
We use bold-faced letters like $\yf$ for vectors containing the value of a signal at several time steps.  

\clearpage

\section{Problem setting}\label{sec:problem}
We consider the predictive control problem for a system with inputs $u_t \in \mathbb{R}^{n_u}$ and outputs $y_t \in \mathbb{R}^{n_y}$. Since we are studying data-driven control schemes like \cite{coulson2019data}, we assume that at time $t$ the $\rho$ past inputs and outputs 
\begin{equation} \label{eq:zvec}
   \zinit_t = \begin{bmatrix} u_{t-\rho}^\top & \cdots & u_{t-1}^\top & y_{t-\rho}^\top & \cdots & y_{t-1}^\top \end{bmatrix}^\top,
\end{equation}
have been measured. Denote $T$ future inputs and outputs as
\begin{equation}
    \uf_t = \begin{bmatrix} u_t^\top \, \cdots \, u_{t+T-1}^\top \end{bmatrix}^\top \!\!\!\!, \quad   \yf_t = \begin{bmatrix} y_t^\top \, \cdots \,  y_{t+T-1}^\top \end{bmatrix}^\top\!\!\!\!.\label{eq:inoutvectors}
\end{equation}
The aim is to minimize some finite-horizon criterion $J(\uf_t, \yf_t)$, for example the quadratic criterion
\begin{equation}\label{eq:Jex}
J(\uf_t, \yf_t) = \| \yf_t - \bar{\yf} \|_{Q}^2 + \| \uf_t - \bar{\uf} \|_{R}^2,
\end{equation}
where $(\bar{\yf}, \bar{\uf})$ denote a reference trajectory and $Q$ and $R$ are user-defined positive definite weight matrices.

The measured outputs in $\zinit_t$ are in practice corrupted by noise, and we do not assume access to a perfect model of the system. Therefore, we  rely on the predicted outputs
\[
    \yh_t = \begin{bmatrix} \hat{y}_t^\top & \cdots & \hat{y}_{t+T-1}^\top \end{bmatrix}^\top,
\]
and formalize the finite-horizon predictive control problem as\footnote{When there is no risk for confusion, we drop the subscript $t$ on $\zinit, \uf, \yf, \yh$.}
\begin{equation}\label{eq:opt}
\begin{aligned}
    \min_{\uf}&\quad J(\uf, \yh) \\
    \text{s.t.}&\quad \yh = \text{Prediction}(\zinit, \uf), \\
              &\quad \uf \in \mathcal{U}, \quad \yh \in \mathcal{Y},
\end{aligned}
\end{equation}
where $\mathcal{U}$ and $\mathcal{Y}$ are constraints on the future inputs and outputs. 

\section{Background}\label{sec:background}
Before stating the main result of the paper we give a brief background on the direct DDPC methods considered, as well as a related multi-step predictor model.

\subsection{Multi-step prediction model}\label{sec:indirect}
A straightforward way to perform the multi-step predictions $\yh$ is to consider a linear in the parameters model 
\begin{equation}\label{eq:mpred}
    \yh = \That \varphi(\zinit, \uf),
\end{equation}
where $\varphi(\zinit, \uf) \in \real^{n_{\varphi}}$ is a regression vector and $\That \in \real^{n_y T \times n_{\varphi}}$ are estimated parameters. For notational convenience, we will often drop the arguments of $\varphi$, and focus the discussion on 
\begin{equation}\label{eq:philinear}
    \varphi = \begin{bmatrix} \zinit^\top & \uf^\top \end{bmatrix}^\top, 
\end{equation}
which is reasonable at least when the system to be controlled is (approximately) linear and $\rho$ is large enough. However, $\varphi$ can be a known transformation of $\zinit$ and $\uf$, e.g., the 
basis expansion used in \cite{berberich2020trajectory} for direct data-driven analysis of nonlinear systems.

To estimate $\That$, the data $\{ (\yf_t^d, \varphi_t^d)\}_{t=1}^N$ are used, where the superscript $d$ denotes training data. Let 
\begin{align}
    Y  &= \begin{bmatrix} \yf_1^d & \cdots & \yf^d_{{N}} \end{bmatrix} \in \real^{n_y T \times N}, \label{eq:Y} \\
    \Phi &= \begin{bmatrix} \varphi_1^d & \cdots & \varphi^d_{{N}} \end{bmatrix} \in \real^{n_\varphi \times N}, \label{eq:Phi}
\end{align}
and define
\begin{align}
    \covpred &= \frac{1}{N} (Y-\That\Phi)(Y-\That\Phi)^\top, \label{eq:covpred} \\
    \covphi &= \frac{1}{N} \Phi\Phi^\top, \label{eq:covphi}
\end{align}
where $\covpred$ can be seen as an estimated prediction error covariance, and $\covphi$ as an estimate of the second order moments of~$\varphi_t^d$. 
Given this, a standard method for parameter estimation is to {minimize the squared prediction errors}
\begin{equation}\label{eq:estcrit}
  \trace(\covpred) = \frac{1}{N} \sum_{t=1}^N (\yf_t^d - \That \varphi_t^d)^\top (\yf_t^d - \That \varphi_t^d).
\end{equation}
If $\covphi$ is positive definite, the unique minimizer is given by 
\begin{equation}\label{eq:That}
    \That = Y \Phi^\dagger.
\end{equation}
Otherwise, there are infinitely many minimizers and, among them, \eqref{eq:That} is the one having the minimum Forbenious norm.

\begin{remark}\label{rem:N} When using for example \eqref{eq:philinear}, we need $\rho$ data-points before $t=1$ and $T-1$ data points after $t=N$ to construct all $\varphi_i^d$. That is, a total of $\wbar{N}=\rho+T+N-1$ training data-points are needed to construct $Y$ and $\Phi$.
\end{remark}
\begin{remark}[SPC]\label{rem:spc}
When the estimate \eqref{eq:That} is used for multi-step predictions in \eqref{eq:opt} we get the SPC method \cite{favoreel1999spc}.
\end{remark}

\subsubsection{{Causal multi-step prediction model}}\label{sec:indirect:causal}
As will be shown in Section~\ref{sec:result}, many direct DDPC strategies implicitly use
the estimate in \eqref{eq:That}.
However, this estimate does not enforce any structure on $\That$, so $\hat{y}_{t+k}$ can be affected by $u_{t+\ell}$ for $\ell > k$ unless the corresponding elements in $\That$ are estimated to zero. Hence, the predictions made by \eqref{eq:mpred} are in general not causal. It is possible to get causal predictions by enforcing a structure on $\That$, see e.g., \cite{qin2005novel, mattsson2023regularization, sader2023causality}. This will reduce the number of parameters to estimate, 
and often leads to improved predictions, see e.g.~\cite{mattsson2023regularization} 
{and the example in Section~\ref{sec:example}}.

\subsection{Direct methods}\label{sec:direct}
In direct methods, no explicit prediction model is estimated. 
Instead, predictions are made by finding a linear combination of $\varphi_t^d$ in the training data (i.e., the columns of $\Phi$) that equals the new $\varphi$, and then let the prediction $\yh$ be the same linear combination of training outputs $\yf_t^d$ (i.e., the columns of $Y$). That is, we find a {$g \in \mathbb{R}^{N}$} that solves
\begin{subequations} \label{eq:PhiYg} 
\begin{align}
    \varphi(\zinit, \uf) &= \Phi g, \label{eq:Phig}\\
    \yh &= Y g. \label{eq:Yg} 
\end{align}
\end{subequations}
Note that \eqref{eq:Phig} can have infinitely many solutions $g$ for a fixed $\varphi$ if $N>n_{\varphi}$, and potentially no solutions if $N<n_{\varphi}$ or $\Phi$ is rank deficient. 
As long as \eqref{eq:Phig} admits a solution,  $g^*$ is a solution if and only if there exists {$ w \in \mathbb{R}^{N}$} such that
\begin{equation}\label{eq:gsol}
g^* = \Phi^\dagger \varphi + (I-\Pi_\Phi)w.
\end{equation}
{The predicted outputs are thus given by
\begin{equation}\label{eq:mfpred}
\yh = Y\Phi^\dagger \varphi + Y(I-\Pi_\Phi)w = \That \varphi + \Delta \yh,
\end{equation}
where $\Delta \yh = (Y-\That\Phi)w \in \range(Y-\That\Phi) = \range(\covpred)$. That is, the predictions made by \eqref{eq:PhiYg} can be viewed as a modified version of the multi-step predictor \eqref{eq:mpred} which adds a slack variable that is constrained to lie in the subspace spanned by the prediction errors on the training data.} 

{When the training data contains noise and $N>n_\varphi$, it is typically not possible to make $\covpred =0$, so \eqref{eq:PhiYg} admits infinitely many solutions.} In direct DDPC methods this is often addressed by introducing a regularization on $g$, cf. \cite{dorfler2022bridging, berberich2020data}. The resulting direct DDPC method can thus be formulated as
\begin{equation}\label{eq:deepc}
\begin{aligned}
    \min_{g}&\quad J(\uf, \yh) + \beta h(g), \\
    \text{s.t.}&\quad \Phi g = \varphi(\zinit, \uf), \\
              &\quad Y g = \yh, \\
              &\quad \uf \in \mathcal{U}, \quad \yh \in \mathcal{Y},
\end{aligned}
\end{equation}
where $\beta$ is a hyper-parameter and $h(g)$ is some regularizer. {In this contribution, we restrict our analysis to}
\begin{align}
    h(g) &= \| g \|_2^2, \label{eq:l2reg} \\
    h(g) &= \| (I-\Pi_\Phi) g \|_2^2, \label{eq:pl2reg}
\end{align}
where the first is simply the $\ell_2$-norm, and the second is called an ``identification induced regularizer'' in \cite{dorfler2022bridging}, {leaving the case of e.g. the $\ell_1$-norm regularizers {\cite{coulson2019data}} to future works.}
\begin{remark}\label{rem:rank}
    If $\Phi$ does not have full row rank,  then there exists $\zinit$ for which \eqref{eq:deepc} has no feasible solution. A possible modification is to add slack variables to $\varphi$, akin to
    what was done in 
    \cite{coulson2019data, berberich2020data, fiedler2021relationship}. To avoid complicating the notation, we do not include such slack variables in this contribution.
\end{remark}
\begin{remark}
For simplicity, $Y$ and $\Phi$ have here been constructed as Hankel matrices, c.f.  \eqref{eq:philinear}-\eqref{eq:Phi}.
The result of Section~\ref{sec:result}, however, is also valid for other structures of $Y$ and $\Phi$, like Page  \cite{Verheijen2023review}, second order moments-based  \cite{ferizbegovic2021willems}, and Koopman-based \cite{lian2021koopman, lazar2023basis} matrices, as long as predictions are made according to \eqref{eq:PhiYg}. 
\end{remark}

\subsubsection{{$\gamma$-DDPC formulation}}
In \cite{breschi2023data}, an alternative formulation of~\eqref{eq:deepc}, called $\gamma$-DDPC, is derived under the assumption:
\begin{assumption}\label{assumption}
    $\covpred$ and $\covphi$ in \eqref{eq:covpred}-\eqref{eq:covphi} are positive definite when $\That = Y\Phi^\dagger$.
\end{assumption}
Note that this assumption is equivalent to assuming that $\begin{bmatrix} \Phi^\top \!\!&\!\! Y^\top \end{bmatrix}^\top$ has full row rank, and holds almost surely if $N>n_{\varphi}+n_yT$, the training data has persistently exciting inputs and white measurement noise in the outputs, cf. \cite{breschi2023data}.

To derive the method, we introduce the Hankel matrices $Z$ and $U$ which are the upper an lower part of $\Phi$ respectively, and the LQ-decomposition
\begin{equation}\label{eq:LQ}
    \begin{bmatrix} Z \\ U \\ Y \end{bmatrix} = \underbrace{\begin{bmatrix} L_{11} & 0 & 0 \\ L_{21} & L_{22} & 0 \\ L_{31} & L_{32} & L_{33} \end{bmatrix}}_{L} \underbrace{\begin{bmatrix} Q_1 \\ Q_2 \\ {Q_3} \end{bmatrix}}_{Q},
\end{equation}
where $L_{ii}$ are invertible, and $Q_i Q_i^\top = I$ while $Q_i Q_j^\top = 0$ if $i\neq j$. Let $\gamma = Q g$ so that we can rewrite \eqref{eq:PhiYg} as 
\[
    \begin{bmatrix} \varphi \\ \yh \end{bmatrix} = \begin{bmatrix} \zinit \\ \uf \\ \yh \end{bmatrix} =  L \gamma. 
\]
Compared to \eqref{eq:deepc}, $\gamma$ has a lower dimension than $g$, and due to the structure of $L$ we can interpret what different parts of $\gamma$ correspond to. Indeed, we can divide $\gamma$ into $\gamma_1 = L_{11}^{-1} \zinit$ that relates to initial values, $\gamma_2$ that relates to future inputs, and $\gamma_3$ that is linked to the quality of the predictions. The formulation of $\gamma$-DDPC in \cite{breschi2023impact} is:
\begin{equation}\label{eq:gddpc}
    \begin{aligned}
    \min_{\gamma}&\quad J(\uf, \yh) + \beta_2 \| \gamma_2\|_2^2 + \beta_3 \| \gamma_3\|_2^2,\\
                 &\quad \begin{bmatrix} \uf \\ \yh \end{bmatrix} = \begin{bmatrix} L_{21} & L_{22} & 0 \\ L_{31} & L_{32} & L_{33} \end{bmatrix} \begin{bmatrix} L_{11}^{-1} \zinit \\ \gamma_2 \\ \gamma_3 \end{bmatrix}, \\
              &\quad \uf \in \mathcal{U}, \quad \yh \in \mathcal{Y}.
    \end{aligned}
\end{equation}

\section{The main result}\label{sec:result}
Our main result is that all the direct methods discussed in Section~\ref{sec:direct} are equivalent to the indirect formulation
\begin{empheq}[box=\fbox]{equation}\label{eq:result}
    \begin{aligned}
        \min_{\uf, \Delta \yh} &\quad J(\uf, \yh) + \frac{\lambda_1}{N} \| \varphi(\zinit, \uf) \|_{\covphi^\dagger}^2 + \frac{\lambda_2}{N} \| \Delta \yh \|_{\covpred^{\dagger}}^2, \\
                \text{s.t.} &\quad \yh = \That \varphi(\zinit, \uf) + \Delta \yh, \\
                            &\quad \Delta \yh \in \range(\covpred), \quad \varphi(\zinit, \uf) \in \range(\covphi), \\
                            &\quad \uf \in \mathcal{U}, \quad \yh \in \mathcal{Y},
    \end{aligned}
\end{empheq}
{where $\Delta \yh$ is introduced in \eqref{eq:mfpred}}, $\lambda_1$ and $\lambda_2$ are hyper-parameters, and $\covpred$ and $\covphi$ are defined in \eqref{eq:covpred}-\eqref{eq:covphi}.
{For the equivalence to hold no assumptions on the training data is required, but if Assumption~\ref{assumption} holds the range constraints can be omitted.}

\begin{theorem}\label{thm:result}
    If \eqref{eq:deepc} is feasible, it has the same optimal $\uf$ and $\yh$ as \eqref{eq:result} with $\That = Y\Phi^\dagger$ and
    \begin{itemize}
        \item $\lambda_1 = \lambda_2 = \beta$ if $h(g) = \|g\|_2^2$.
        \item $\lambda_1 =0$ and $\lambda_2 = \beta$ if $h(g) = \|(I-\Pi_\Phi) g\|_2^2$.
    \end{itemize}
    Furthermore, if Assumption~\ref{assumption} holds then \eqref{eq:result} and \eqref{eq:gddpc} have the same optimal $\uf$ and $\yh$ if $\lambda_1 = \beta_2$ and $\lambda_2 = \beta_3$.
\begin{proof} 
See Appendix~\ref{app:thm1}.
\end{proof}\end{theorem}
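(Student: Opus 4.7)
The approach is to use the parameterization \eqref{eq:gsol} to rewrite \eqref{eq:deepc} as an optimization over $(\uf, \Delta\yh)$. For $\varphi \in \range(\Phi)$, the solutions of $\Phi g = \varphi$ form $\{g = \Phi^\dagger\varphi + (I - \Pi_\Phi)w : w \in \mathbb{R}^N\}$, so $\yh = Yg = \That\varphi + \Delta\yh$ with $\Delta\yh = (Y - \That\Phi)w \in \range(Y - \That\Phi) = \range(\covpred)$. Since $\range(\covphi) = \range(\Phi\Phi^\top) = \range(\Phi)$, the constraint $\varphi \in \range(\covphi)$ in \eqref{eq:result} coincides with the feasibility of $\Phi g = \varphi$, so the feasible sets for $(\uf, \yh)$ in \eqref{eq:deepc} and \eqref{eq:result} match. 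It then remains to rewrite $\beta h(g)$ in terms of $\varphi$ and $\Delta\yh$ after minimizing out the redundancy in $w$.

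\textbf{Matching the regularizers.} For $h(g) = \|g\|_2^2$, orthogonality of $\Phi^\dagger\varphi \in \range(\Phi^\top)$ and $(I - \Pi_\Phi)w \in \text{null}(\Phi)$ yields $\|g\|_2^2 = \|\Phi^\dagger\varphi\|_2^2 + \|(I - \Pi_\Phi)w\|_2^2$. The SVD identity $(\Phi^\dagger)^\top \Phi^\dagger = (\Phi\Phi^\top)^\dagger$ turns the first summand into $\tfrac{1}{N}\|\varphi\|_{\covphi^\dagger}^2$. For the second, several $w$'s realize the same $\Delta\yh$, so one minimizes $\|(I - \Pi_\Phi)w\|_2^2$ subject to $(Y - \That\Phi)w = \Delta\yh$. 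The key observation is that $\Phi(Y - \That\Phi)^\top = \Phi Y^\top - \Phi\Phi^\top\That^\top = 0$ when $\That = Y\Phi^\dagger$ (using $\Phi\Phi^\top(\Phi^\dagger)^\top = \Phi$), whence $\range((Y - \That\Phi)^\top) \subseteq \text{null}(\Phi)$; the unconstrained min-norm solution $w^\star = (Y - \That\Phi)^\dagger \Delta\yh$ therefore automatically lies in $\text{null}(\Phi)$ and has squared norm $\tfrac{1}{N}\|\Delta\yh\|_{\covpred^\dagger}^2$. This proves the first bullet with $\lambda_1 = \lambda_2 = \beta$. For $h(g) = \|(I - \Pi_\Phi)g\|_2^2$, the identity $(I - \Pi_\Phi)\Phi^\dagger\varphi = 0$ eliminates the first contribution, leaving only the $\Delta\yh$ penalty and hence $\lambda_1 = 0$, $\lambda_2 = \beta$.

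\textbf{Equivalence with $\gamma$-DDPC.} Under Assumption~\ref{assumption}, the range constraints become vacuous and each $L_{ii}$ in \eqref{eq:LQ} is invertible. Since $\range(\Phi^\top)$ is spanned by the rows of $Q_1$ and $Q_2$, we have $\That\Phi = Y\Pi_\Phi = L_{31}Q_1 + L_{32}Q_2$ and therefore $\That\varphi = L_{31}\gamma_1 + L_{32}\gamma_2$; matching with $\yh = L_{31}\gamma_1 + L_{32}\gamma_2 + L_{33}\gamma_3$ gives $\Delta\yh = L_{33}\gamma_3$. Moreover $Y - \That\Phi = L_{33}Q_3$, so $\covpred = \tfrac{1}{N}L_{33}L_{33}^\top$ and $\tfrac{\lambda_2}{N}\|\Delta\yh\|_{\covpred^\dagger}^2 = \lambda_2\|\gamma_3\|_2^2$. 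Writing $M$ for the lower-triangular block built from $L_{11}$, $L_{21}$, $L_{22}$, an analogous calculation yields $N\covphi = MM^\top$ and $\varphi = M[\gamma_1^\top\ \gamma_2^\top]^\top$, hence $\tfrac{\lambda_1}{N}\|\varphi\|_{\covphi^\dagger}^2 = \lambda_1(\|\gamma_1\|_2^2 + \|\gamma_2\|_2^2)$. Since $\gamma_1 = L_{11}^{-1}\zinit$ is pinned down by the initial condition, $\|\gamma_1\|_2^2$ is constant in the optimization, and the identifications $\beta_2 = \lambda_1$, $\beta_3 = \lambda_2$ give the same optimal $(\uf, \yh)$.

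\textbf{Main obstacle.} The delicate point is the pseudoinverse bookkeeping, specifically establishing the feasibility correspondence without assuming $\Phi$ has full row rank, and verifying that the minimum-norm $w$ realizing a prescribed $\Delta\yh$ automatically lives in $\text{null}(\Phi)$, so that the inner minimization produces exactly $\tfrac{1}{N}\|\Delta\yh\|_{\covpred^\dagger}^2$ without leftover projection terms. Once these structural identities are in place, the LQ-based steps for the $\gamma$-DDPC comparison are largely algebraic.
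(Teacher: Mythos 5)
Your proposal is correct and follows essentially the same route as the paper's proof: the change of variables $g = \Phi^\dagger\varphi + (I-\Pi_\Phi)w$, the orthogonal split of $\|g\|_2^2$ together with the pseudo-inverse identity $\|\Phi^\dagger\varphi\|_2^2 = \tfrac{1}{N}\|\varphi\|_{\covphi^\dagger}^2$ (the paper's Lemma~\ref{lem:range}), the inner minimization over the redundant $w$ yielding $\tfrac{1}{N}\|\Delta\yh\|_{\covpred^\dagger}^2$, and the same LQ-based algebra (with the constant $\|\gamma_1\|_2^2$ term) for the $\gamma$-DDPC equivalence. Your observation that $\Phi(Y-\That\Phi)^\top = 0$ so the min-norm $w$ lies in $\operatorname{null}(\Phi)$ is just a slightly more explicit justification of the paper's step $(I-\Pi_\Phi)w = (Y(I-\Pi_\Phi))^\dagger\Delta\yh$.
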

\begin{remark}
{Compared to the standard 
    formulation \eqref{eq:deepc} 
    having $N$ optimization variables, 
    $\gamma$-DDPC and the indirect formulation \eqref{eq:result} only have $(n_u\!+\!n_y)T$ optimization variables, making them appealing for large datasets where $N \gg T$. }
\end{remark}
\begin{remark}
In \cite{sader2023causality} a causal variant of $\gamma$-DDPC, named RC-DeePC, is proposed.
In Appendix~\ref{appendix:gamma} we show that it is equivalent to~\eqref{eq:result} with $\That$ replaced by the casual estimate given in \cite[Equation 17]{sader2023causality}.
cf. Section~\ref{sec:indirect:causal}.
\end{remark}
\begin{remark}
If $\covphi$ is not positive definite~\eqref{eq:result} can be infeasible, cf. Remark~\ref{rem:rank}. As in \cite{coulson2019data, fiedler2021relationship}, slack variables can be added in $\varphi$ to handle this. With~\eqref{eq:result} this can be avoided by simply removing the range constraint on $\varphi$, since the predictions $\That \varphi = Y\Phi^\dagger \varphi = Y\Phi^\dagger (\Phi\Phi^\dagger \varphi)$ are only affected by $\Phi \Phi^\dagger\varphi$, i.e., the orthogonal projection of $\varphi$ onto $\range(\covphi)$. 
\end{remark}

\section{Implications}\label{sec:implications}
The main result outlined in Section~\ref{sec:result} allows us to analyze direct DDPC approaches in terms of estimated parameters, covariance matrices, and slack variables. 

\subsection{The role of the slack variable} \label{sec:implications:slack}

As the slack variable in \eqref{eq:result}
can adjust the predicted outputs in $J$ without changing the inputs $\uf$, the effect is similar to reducing the weight on the tracking cost in the criterion. 
This can be seen explicitly for unconstrained problems.
\smallskip
\begin{corollary}\label{corr:unconstrained}
Consider the \emph{unconstrained problem}, where the constraint sets $\mathcal{U}$ and $\mathcal{Y}$ are removed from \eqref{eq:result}, and let $J$ be the quadratic cost function in \eqref{eq:Jex}. If  $\covpred$ and $\covphi$ are positive definite, then \eqref{eq:result} give the same optimal $\uf$ as
\begin{equation}\label{eq:unconstrained}
\min_{\uf} \quad \| \bar{\yf} - \That\varphi(\zinit, \uf)\|_{\wtilde{Q}}^2 + \| \bar{\uf} - \uf\|_R^2 + \frac{\lambda_1}{N}\| \varphi(\zinit, \uf)\|_{\covphi^{-1}}^2,
\end{equation}
where $\wtilde{Q} = Q - Q(Q+\frac{\lambda_2}{N}\covpred^{-1})^{-1}Q \preceq Q$.
\end{corollary}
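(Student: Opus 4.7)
The plan is to reduce \eqref{eq:result} to an optimization over $\uf$ alone by partially minimizing out the slack variable $\Delta\yh$. Since $\covpred$ and $\covphi$ are positive definite, the range constraints in \eqref{eq:result} become vacuous ($\range(\covpred)=\real^{n_yT}$, $\range(\covphi)=\real^{n_\varphi}$) and the pseudo-inverses coincide with ordinary inverses. Dropping $\mathcal{U}$ and $\mathcal{Y}$ and substituting $\yh = \That\varphi(\zinit,\uf)+\Delta\yh$ into the quadratic cost \eqref{eq:Jex}, the problem becomes unconstrained in $(\uf,\Delta\yh)$.

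First I would fix $\uf$ (hence $\varphi$ and the residual $r := \bar{\yf} - \That\varphi$) and minimize the remaining quadratic form in $\Delta\yh$, namely
$\|\Delta\yh - r\|_Q^2 + \tfrac{\lambda_2}{N}\|\Delta\yh\|_{\covpred^{-1}}^2$.
Since $Q \succ 0$ and $\covpred^{-1} \succ 0$, the gradient condition yields the closed-form minimizer
$\Delta\yh^\star = \bigl(Q + \tfrac{\lambda_2}{N}\covpred^{-1}\bigr)^{-1} Q\,r$.

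Next, I would substitute $\Delta\yh^\star$ back into the two $\Delta\yh$-dependent terms and simplify. A direct expansion (or equivalently completion of squares) collapses the cross terms and gives the minimum value $r^\top \wtilde{Q} r$ with $\wtilde{Q} = Q - Q(Q + \tfrac{\lambda_2}{N}\covpred^{-1})^{-1}Q$, which is precisely $\|\bar{\yf}-\That\varphi(\zinit,\uf)\|_{\wtilde Q}^2$. The remaining terms $\|\uf-\bar{\uf}\|_R^2$ and $\tfrac{\lambda_1}{N}\|\varphi(\zinit,\uf)\|_{\covphi^{-1}}^2$ do not involve $\Delta\yh$ and pass through unchanged. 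The resulting outer problem in $\uf$ is exactly \eqref{eq:unconstrained}, so the two problems share the same optimal $\uf$. The ordering $\wtilde{Q}\preceq Q$ is immediate from positive semidefiniteness of $Q(Q+\tfrac{\lambda_2}{N}\covpred^{-1})^{-1}Q$.

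There is no real obstacle here: once the range constraints are shown to be inactive and the pseudo-inverses replaced by inverses, the entire argument is routine convex quadratic partial-minimization. The only care needed is in the algebraic simplification leading to $\wtilde Q$; if a more compact form is desired, the Woodbury identity can be applied to rewrite $\wtilde Q = \bigl(Q^{-1} + \tfrac{N}{\lambda_2}\covpred\bigr)^{-1}$, which also makes $\wtilde Q \preceq Q$ transparent.
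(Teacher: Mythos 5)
Your proposal is correct and follows essentially the same route as the paper: the authors likewise note that positive definiteness removes the range constraints, solve the first-order condition for $\Delta\yh$ to obtain $\Delta\yh = (Q+\tfrac{\lambda_2}{N}\covpred^{-1})^{-1}Q(\bar{\yf}-\That\varphi)$, and substitute back to get the $\wtilde{Q}$-weighted cost. Your added Woodbury form $\wtilde{Q}=(Q^{-1}+\tfrac{N}{\lambda_2}\covpred)^{-1}$ is a nice touch but not a different argument.
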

\begin{proof}
In the unconstrained problem we can solve for $\Delta\yh$ to get $\Delta\yh = (Q+\frac{\lambda_2}{N} \covpred^{-1})^{-1}Q(\bar{\yf} - \That\varphi)$. Inserting this back into the criterion of \eqref{eq:result} yields \eqref{eq:unconstrained}.
\end{proof}
\smallskip
That is, unconstrained direct DDPC methods with a quadratic cost function are equivalent to SPC with a reduced weight on the tracking cost. Hence, in the extreme case when $\lambda_2=0$ the tracking cost is disregarded by direct DDPC methods. It can be noted that the addition of $\| \varphi \|_{\covphi^\dagger}^2$ increases the control cost, which has a similar effect. 
In the presence of constraints on $\yh$ the slack variable can also adjust the predictions to satisfy the constraints without adjusting the input. Hence, the direct methods re-weights SPC to focus more on the control cost.

\subsection{The role of the past horizon $\rho$}
\label{sec:implications:past}
When $\varphi$ is chosen as \eqref{eq:philinear} we can see $\That \varphi$ as a multi-step ARX predictor, where e.g., $\hat{y}_{t}$ is given by a 1-step ARX predictor. From the system identification literature, it is well known that for LTI systems with measurement noise---which is a case often studied in the direct DDPC literature---the optimal predictor has an output error structure and can not be expressed as an ARX predictor with finite $\rho$, see e.g., \cite{ljung1999system}. This implies that it can be beneficial to increase $\rho$ beyond the order of the underlying system. However, increasing $\rho$ can also lead to more variance in the estimated parameters, as empirically analyzed in e.g., \cite{breschi2023data, breschi2023subspace}.
In light of the reformulation \eqref{eq:result}, we can 
further note that increasing $\rho$ on a fixed training data set leads to a reduction of $\trace(\covpred)$, and thus the penalty on the slack variable tend to increase.
For $\rho$ large enough, $\covpred=0$, {which means that the slack $\Delta \yh$ is constrained to zero}, and the predictors used by direct DDPC methods coincide with the multi-step predictor in \eqref{eq:mpred}.

\subsection{The role of the training data} \label{sec:implications:training}
For this section, let us assume that the training data has persistently exciting inputs and the measured outputs are corrupted by white noise, so $\Phi$, $Y$ and $\begin{bmatrix} \Phi^\top \!\!&\!\! Y^\top\end{bmatrix}^\top$ have full rank almost surely, cf. \cite{breschi2023data}, even though the number of samples must be large enough to get full \emph{row} rank. We can then see that the behavior of the direct DDPC methods go through some transition as $N$ increases:
\begin{itemize}
    \item $N \leq n_\varphi$: there exists $\That$ that fits the training data perfectly, so $\covpred =0$ and the direct methods use the multi-step predictor in \eqref{eq:mpred} without slack. 
    \item $n_\varphi < N < n_\varphi+n_yT$: the range space of $\covpred$ increases until $\covpred$ becomes positive definite at $N = n_\varphi + n_yT$.
    \item $N > n_\varphi + n_yT$: as $N$ increases, $\covpred$ converges to the prediction error covariance using the asymptotic~$\That$. 
\end{itemize}
    Note that, the estimate $\That$ is typically poor for $N=n_\varphi$ due to its high variance. 
    {As $N$ increases, it will, under relatively weak assumptions \cite{ljung1999system}, converge to the predictor within the model class \eqref{eq:mpred} that has the smallest prediction error covariance for the system, and experimental conditions, used to generate the training data. While there may exist better predictors outside the model class \eqref{eq:mpred}, it can be expected that the predictions of \eqref{eq:mpred}, as well as the performance of SPC, will improve with increasing $N$.}
    

At the same time, while $\covpred$ typically converges to a finite positive definite matrix as $N\rightarrow \infty$, the penalty on the slack in \eqref{eq:result} is also weighted by $\lambda_2/N$. So, for fixed $\lambda_2$, \eqref{eq:result}  will allow for more slack when $N$ is large. This can also be seen in Corollary~\ref{corr:unconstrained}, where $\wtilde{Q}\rightarrow 0$ as $N\rightarrow \infty$, thus leading the direct methods to disregard the tracking cost for large $N$ unless $\lambda_2$ is also increased. This shows an important drawback of using the $h(g) = \|g\|_2^2$ in \eqref{eq:deepc}, since it sets $\lambda_1 = \lambda_2$. 
To achieve good predictions for large $N$ we must increase $\lambda_2$, but this will also mean that the added penalty on $\varphi$ is kept large even for large~$N$. This problem is avoided by the $\gamma$-DDPC method \cite{breschi2023data} which penalizes $\varphi$ and the slack variable separately, highlighting from a different perspective the tuning challenges inherent in the approaches proposed in \cite{coulson2019data,berberich2020data}.

\section{Illustrative example} \label{sec:example}
The implications discussed in Section~\ref{sec:implications} are now illustrated on the system considered in 
 \cite{mattsson2023regularization}, i.e., 
\begin{equation*}
    y_t = 1.2 y_{t-1} - 0.3 y_{t-2} - 0.1 y_{t-3} + 0.5 u_{t-1} - 0.4 u_{t-2} + 0.1 u_{t-3},
\end{equation*}
with an additional input constraint $u \in \mathcal{U} = [-1, 1]$. In our evaluation, we consider datasets with increasing $\wbar{N}=\rho+T+N-1$ (see Remark~\ref{rem:N}), to empirically illustrate the discussion in Section~\ref{sec:implications:slack} and \ref{sec:implications:training}. In particular, for the training data collection, the system is excited with a (saturated) Gaussian input $u \sim \mathcal{N}(0, \sigma_u^2)$, with $\sigma_u = 0.6$.
The output measurements are corrupted by white noise with standard deviation $\sigma_y = 0.1$, which yields a signal-to-noise ratio of $10$, approximately.

\let\thefootnote\relax\footnotetext{Source code: \textcolor{blue}{\url{https://github.com/bonassifabio/ddpc}}}

Potential benefits and drawbacks of the additional penalty on $\varphi$ in \eqref{eq:result} have been studied in the $\gamma$-DDPC framework~\cite{breschi2023impact}. 
Therefore, we here let $\lambda_1 =0$, and focus on the impact of the slack variable implicitly introduced by direct methods.
The following controllers have been considered:
\begin{enumerate}[a)]
    \item \textbf{DeePC}$_{\lambda_2}$ -- DeePC \eqref{eq:deepc} with regularization \eqref{eq:pl2reg}, equivalent to \eqref{eq:result} with   slack regularization weight $\lambda_2 = \beta \in \{1, 10, 100, 1000\}$;
    
    \item \textbf{SPC} -- See Remark~\ref{rem:spc}, equivalent to \eqref{eq:result} with $\That$ given by~\eqref{eq:That}, and  slack constrained to zero.
    \item \textbf{C-SPC} -- 
    SPC with $\That$ estimated with a structure that enforces causal predictions, cf. Section~\ref{sec:indirect:causal}. 
    
\end{enumerate}
For all these methods, a prediction horizon $T = 30$, a model-order $\rho =20$, and the quadratic cost function  \eqref{eq:Jex} with weights $Q=I$, $R = 0.1I$, and $\bar{u}_{t+k} =0$ have been used. 

\smallskip
\noindent{\bf Setup} -- In the tests conducted, we assume that the system has null initial conditions and the setpoint is $\bar{y}_{t+k} = 0.75$. The performance of the controllers described above is evaluated, for a given training set, by simulating the open-loop response $\tilde{\yf}(\uf^\star)$ of the system under the optimal control sequence $\uf^\star$, i.e., the optimal solution of \eqref{eq:result}.
The first performance index used to evaluate the controllers is the achieved cost criterion 
\begin{equation}\label{eq:perfcriterion}
    J^\star = J \big(\uf^\star, \tilde{\yf}(\uf^\star) \big) = \| \tilde{\yf}(\uf^\star) - \bar{\yf} \|_Q^2 +  \| \uf^\star \|_R^2.
\end{equation}
We furthermore evaluate the distance from the \emph{oracle}, i.e., a state-feedback MPC regulator based on the exact system model, having the same cost function and prediction horizon as the DDPC schemes.
This comparison evaluates how close to the ideal controller the different methods are.
Letting $\uf^o$ be the oracle's optimal control sequence and  $\yf^o$  the corresponding output,  we can evaluate the dissimilarity from the oracle as
\begin{equation} \label{eq:oracledist}
    J^o(\uf^\star, \tilde{\yf}(\uf^\star) \big) = \| \tilde{\yf}(\uf^\star) - \yf^o \|_Q^2 +  \| \uf^\star - \uf^o \|_R^2.
\end{equation}
These performance indices have been averaged over $30$ uncorrelated realizations of the measurement noise affecting $y$. 

In the evaluation, the length of the training data set has been varied from $\wbar{N} = 119$ to $\wbar{N} = 10^4$. Note that here $\wbar{N}=119$ corresponds to $N=n_\varphi$. For each $\wbar{N}$, the performance is assessed over 200 random realizations of the training data, ensuring that the results are not biased by a specific realization. 
\begin{figure}
    \centering
    \includegraphics[width=0.95 \linewidth]{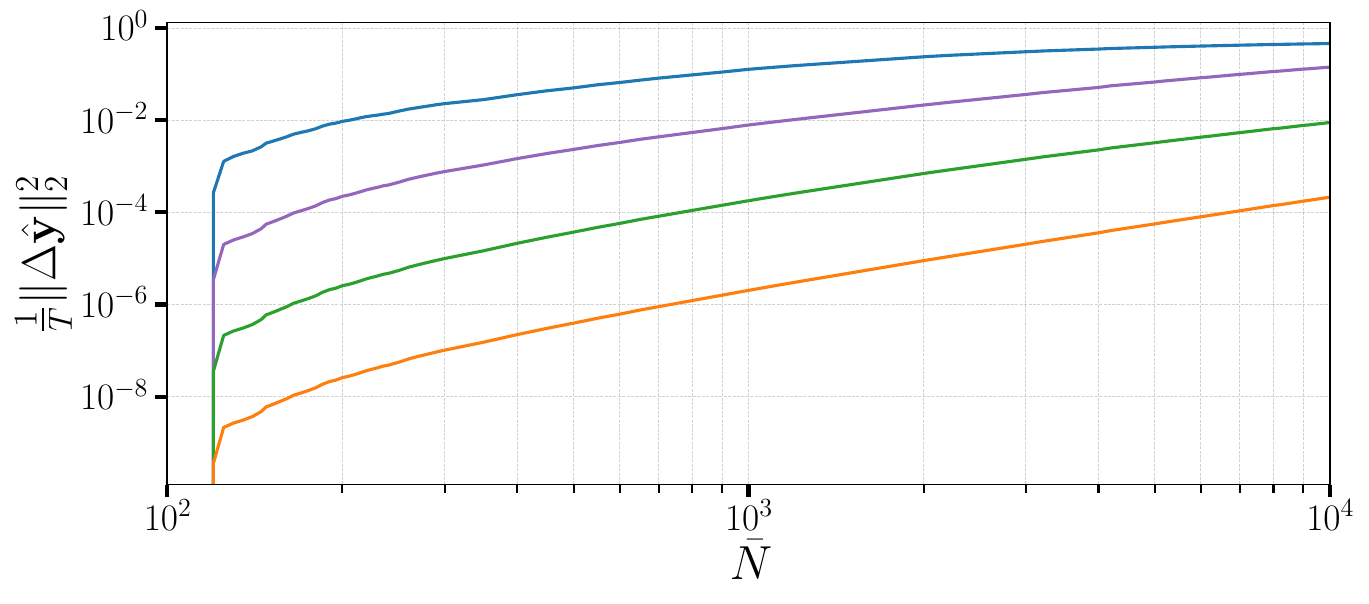}
    \vspace{-3mm}
    \caption{Mean squared value of the slack 
    $\Delta \yh$ used by DeePC$_{\lambda_2}$ for different values of 
    $\lambda_2$ (blue: $1$, purple: $10$, green: $100$,  orange: $1000$).}\vspace{-.3cm}
    \label{fig:slack}
\end{figure}
\begin{figure}
    \centering
    \hspace{1.7mm}\includegraphics[width=0.925 \linewidth]{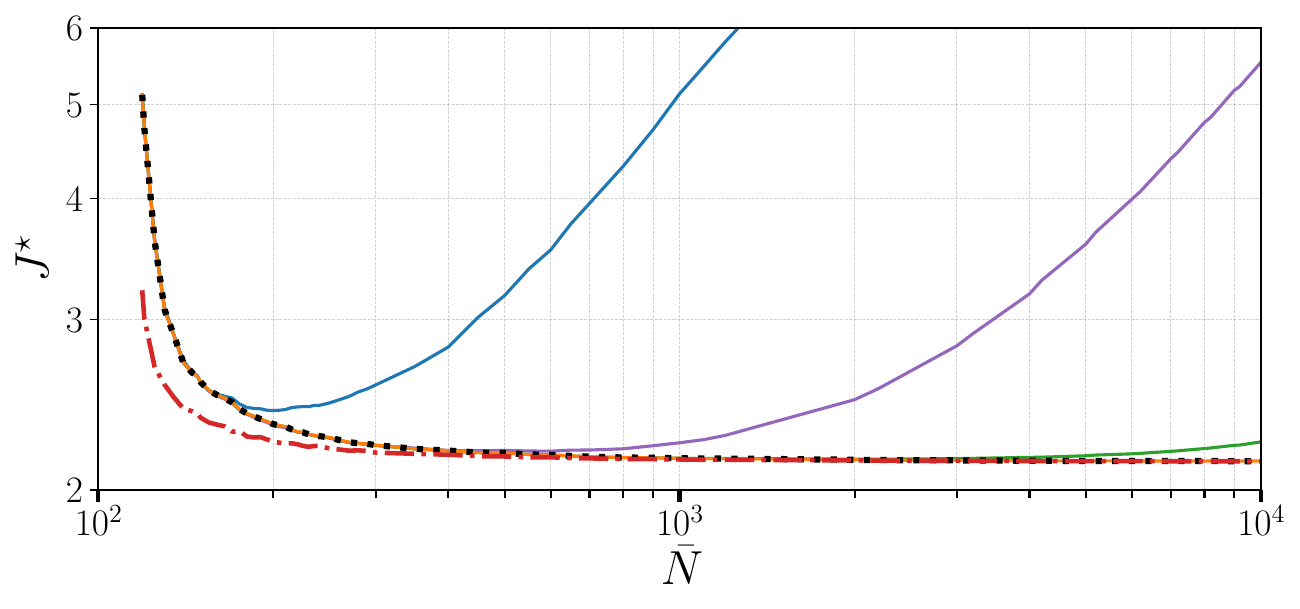} \\
    \includegraphics[width=0.95 \linewidth]{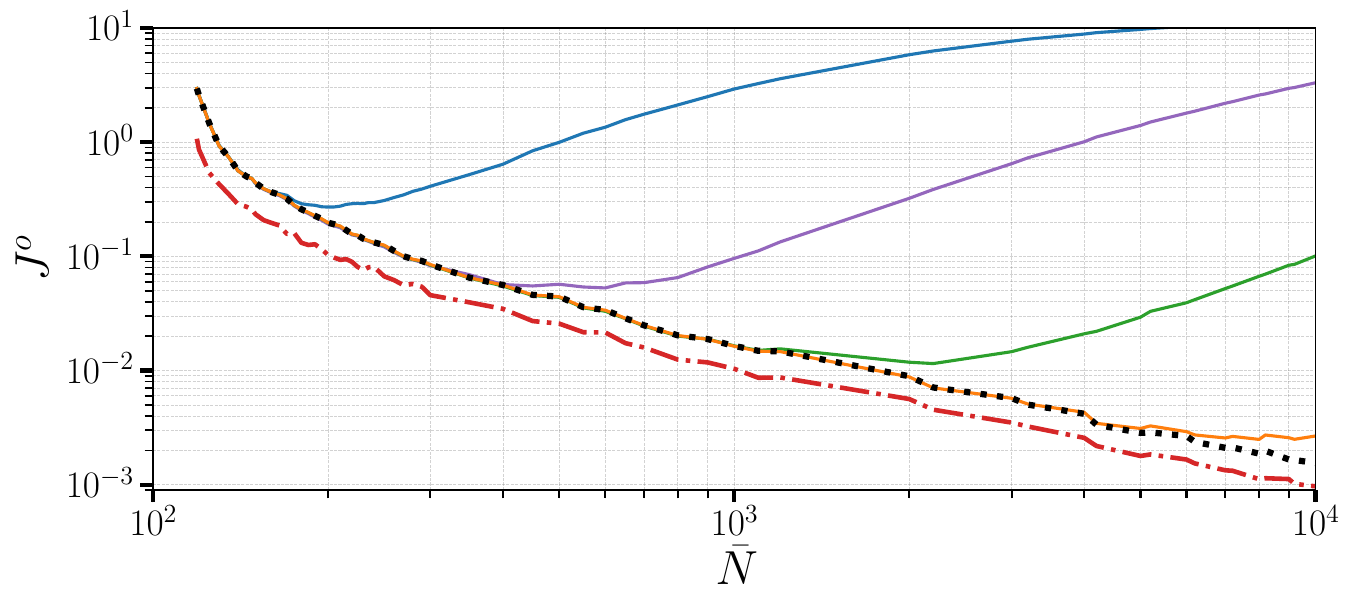}
    \vspace{-3mm}
    \caption{Cost criterion \eqref{eq:perfcriterion} (top) and dissimilarity from the oracle \eqref{eq:oracledist} (bottom) achieved by the DeePC$_{\lambda_2}$ (solid lines, $\lambda_2$ color-coded as in Figure \ref{fig:slack}), SPC (black-dotted line), and C-SPC (red dashed-dotted line).}
    \label{fig:criterion}\vspace{-.3cm}
\end{figure}

\smallskip
\noindent{\bf Results and discussion} -- In Fig. \ref{fig:slack}, we show the mean squared value of the slack variable $\Delta \yh$ used by DeePC$_{\lambda_2}$.
As discussed in Section \ref{sec:implications:training}, $\covpred = 0$ for $\wbar{N}=119$ in this setting, so the slack is {forced to zero by the range constraint.
As $\wbar{N}$ increases, the rank of $\covpred$ increases until it has full row-rank while the regularization weight $\frac{\lambda_2}{N}$ decreases. This implies that DeePC$_{\lambda_2}$ will make more use of the slack variable for a fixed $\lambda_2$, which is evident from Fig \ref{fig:slack}.
This confirms the insight, discussed in Section \ref{sec:implications:training}}, that
the user parameter $\lambda_2$ should be tuned relatively to $\wbar{N}$ to ensure that the slack variable is properly penalized.

Figure \ref{fig:criterion} shows the performance indices scored by DeePC$_{\lambda_2}$ in comparison to those achieved by SPC and by C-SPC. 
As discussed, in low-data regimes the slack is limited {by the range constraint and is penalized more since $\frac{\lambda_2}{N}$ is big. This means that the slack is used less, see again Fig.~\ref{fig:slack}, so that DeePC$_{\lambda_2}$ closely resemble SPC, aligning with the results shown in~\cite{fiedler2021relationship}.
On the other hand, as discussed in Section~\ref{sec:implications:slack} the performance of DeePC$_{\lambda_2}$ deteriorates for large $\wbar{N}$ if $\lambda_2$ is fixed, due to the increasing use of the slack variable which reduces the tracking cost. This effect is more pronounced for small $\lambda_2$.}

In the considered case, the slack variable implicitly introduced by the direct DeePC$_{\lambda_2}$ consistently led to worse performance than the slack-free SPC method. 
It can also be noted that C-SPC, which uses a reduced set of estimated parameters $\That$, slightly improves the performance compared to SPC, especially in low data regimes, which is in line with results in~\cite{sader2023causality}.

\section{Conclusions}\label{sec:conclusions}
By proving the equivalence of several data-driven predictive control approaches and relaxed indirect methods, this work sheds new light on the origin of the flexibility characterizing DDPC approaches while unveiling the origin of their possible fragility. Reformulating the direct methods in terms of identified parameters and covariance matrices also 
opens up the possibility of using tools from system identification for analysis in future work.
\bibliographystyle{IEEEtran}  
\bibliography{refs.bib}

\appendices

\section{Proof of Theorem~1}\label{app:thm1}
\subsection{Proof of equivalence with \eqref{eq:deepc}}\label{app:deepc}
\begin{lemma}\label{lem:range}
If $x \in \range(M)$ then $\|M^\dagger x\|_2^2 = \| x\|_{(MM^\top)^\dagger}^2$.
\end{lemma}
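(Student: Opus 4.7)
The plan is to reduce the claim to the matrix identity $(M^{\dagger})^{\top} M^{\dagger} = (MM^{\top})^{\dagger}$, after which the lemma follows by a one-line manipulation of quadratic forms. It is worth noting that this identity actually holds for every $x$, but the stated form with $x\in\range(M)$ is all that is needed in the rest of the appendix.

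First, I would introduce a (thin) SVD $M = U\Sigma V^{\top}$, so that $M^{\dagger} = V\Sigma^{\dagger}U^{\top}$, $MM^{\top} = U\Sigma\Sigma^{\top}U^{\top}$, and $(MM^{\top})^{\dagger} = U(\Sigma\Sigma^{\top})^{\dagger}U^{\top}$. Multiplying out and using $V^{\top}V = I$ yields
\[
(M^{\dagger})^{\top} M^{\dagger} \;=\; U(\Sigma^{\dagger})^{\top}\Sigma^{\dagger}U^{\top}.
\]
Both $(\Sigma^{\dagger})^{\top}\Sigma^{\dagger}$ and $(\Sigma\Sigma^{\top})^{\dagger}$ are diagonal matrices whose nonzero entries are $1/\sigma_i^{2}$ for each nonzero singular value $\sigma_i$ of $M$, hence they agree and
$(M^{\dagger})^{\top} M^{\dagger} = (MM^{\top})^{\dagger}$. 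Pre- and post-multiplying by $x^{\top}$ and $x$ delivers
\[
\|M^{\dagger}x\|_2^{2} \;=\; x^{\top}(M^{\dagger})^{\top} M^{\dagger}x \;=\; x^{\top}(MM^{\top})^{\dagger}x \;=\; \|x\|_{(MM^{\top})^{\dagger}}^{2},
\]
which is what I want.

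There is no genuine obstacle here; the argument is essentially SVD bookkeeping. A completely SVD-free alternative is to use the standard identity $M^{\dagger} = M^{\top}(MM^{\top})^{\dagger}$ together with the Moore--Penrose property $A^{\dagger}AA^{\dagger} = A^{\dagger}$ applied to the symmetric matrix $A = MM^{\top}$; this reads as a three-line computation and may be the cleanest form for the final write-up. If one would rather make explicit use of the hypothesis $x\in\range(M)$, one can substitute $x = Mz$ and verify the scalar equality via $M^{\top}(MM^{\top})^{\dagger}M = M^{\dagger}M$, but this is no shorter.
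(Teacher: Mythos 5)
Your proof is correct, and it is a mildly different route from the paper's. The paper proves the lemma by invoking the hypothesis $x\in\range(M)$ to write $x = MM^\dagger x$ and then expanding the norms using the Moore--Penrose identities $M^\dagger = M^\top(MM^\top)^\dagger$ and $M^\dagger = M^\dagger M M^\dagger$ --- which is essentially the ``SVD-free alternative'' you sketch at the end. Your main argument instead goes through the SVD to establish the matrix identity $(M^\dagger)^\top M^\dagger = (MM^\top)^\dagger$, and your SVD bookkeeping checks out (with a thin SVD $M = U\Sigma V^\top$ one has $V^\top V = I$, so $(M^\dagger)^\top M^\dagger = U\Sigma^{-2}U^\top = (MM^\top)^\dagger$). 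What this buys you is the observation, which you correctly make explicit, that the equality holds for \emph{every} $x$ and the range hypothesis is superfluous; the paper's version states and uses only the weaker conditional form, which suffices for the applications in the appendix (where the relevant vectors $\varphi$ and $\Delta\yh$ do lie in the appropriate range spaces). Either write-up is acceptable; the direct identity-based computation is marginally shorter and matches the paper, while the SVD route makes the unconditional nature of the identity transparent.
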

\begin{proof}
Since $x \in \range(M)$ we get $x = MM^\dagger x$. Hence, by expanding the norms and using 
$M^\dagger = M^\top(MM^\top)^\dagger$ and $M^\dagger = M^\dagger M M^\dagger$ the proposition follows. 
\end{proof}
We assume that \eqref{eq:deepc} is feasible, and use \eqref{eq:gsol} to make a change of variable from $g$ to $w$. Then
\[
    h(g) = \| (I-\Pi_{\Phi}) w \|_2^2 + \begin{cases} \frac{1}{N} \| \varphi \|_{\covphi^{\dagger}}^2 & \text{for \eqref{eq:l2reg}} \\ 0 & \text{for \eqref{eq:pl2reg}} \end{cases},
\]
where Lemma~\ref{lem:range} and the fact that $\varphi \in \range(\Phi)=\range(\covphi)$ for feasible solutions of \eqref{eq:deepc} has been used. 
Furthermore, we see from \eqref{eq:mfpred} that $w$ satisfy 
\begin{equation}\label{eq:deltayh}
\Delta\yh \triangleq \yh -\That\varphi = (Y-\That \Phi) w = Y(I-\Pi_\Phi)w.
\end{equation}
Hence $\Delta \yh \in \range(\covpred)$. 
Noting that $w$ should minimize $\|(I-\Pi_\Phi)w\|_2^2$ while satisfying \eqref{eq:deltayh}, we get
    $(I-\Pi_\Phi) w =(Y(I-\Pi_\Phi))^\dagger \Delta \yh$, so
\[
    \| (I-\Pi_{\Phi}) w\|_2^2 = \Delta \yh^\top (Y-\That\Phi)^{\dagger\top} (Y-\That\Phi)^\dagger\Delta \yh = \frac{1}{N}\| \Delta \yh\|_{\covpred^{\dagger}}^2,
\]
where the last equality follows from Lemma~\ref{lem:range}. 

So, for feasible $\yh$ and $\uf$ we get $\varphi \in \range(\covphi)$ and $\Delta \yh = \yh - \That \varphi \in \range(\covpred)$, and the optimal $g$ gives 
\[
    h(g) = \frac{1}{N} \| \Delta\yh\|_{\covpred^{\dagger}}^2 + \begin{cases} \frac{1}{N} \| \varphi\|_{\covphi^\dagger}^2 & \text{for \eqref{eq:l2reg}} \\ 0 & \text{for \eqref{eq:pl2reg}} \end{cases}.
\]
The equivalence follows by inserting this into \eqref{eq:deepc} and making a change of variables.

\subsection{Proof of equivalence with \eqref{eq:gddpc} and RC-DeePC}\label{appendix:gamma}
We note that adding $\beta_2\|\gamma_1\|_2^2$ with $\gamma_1 = L_{11}^{-1} \zinit$ to the criterion in \eqref{eq:gddpc} does not change the optimal solution. Let 
\[
    M_1 = \begin{bmatrix} L_{11} & 0 \\ L_{21} & L_{22} \end{bmatrix}, \quad P_1 = \begin{bmatrix} Q_1 \\ Q_2 \end{bmatrix}.
\]
Note that $M_1$ is invertible, $P_1P_1^\top = I$ and $M_1 P_1 = \Phi$. Given the optimal $\uf$, the optimal $\gamma_1$ and $\gamma_2$ are given by $M_1^{-1} \varphi$, and
\[
  \| \gamma_1 \|_2^2 + \| \gamma_2\|_2^2 = \varphi^\top (M_1 M_1^\top)^{-1} \varphi  = \frac{1}{N}\| \varphi \|_{\covphi^{-1}}^2,
\]
since $M_1 M_1^\top = M_1 P_1 P_1^\top M_1^\top = \Phi \Phi^\top = N\covphi$. Furthermore $\That = Y\Phi^\dagger = \begin{bmatrix} L_{31} &  L_{32}\end{bmatrix}M^{-1}$, so $\yh = \That \varphi + L_{33} \gamma_3$, and
\[
    \|\gamma_3\|_2^2 = \| \yh - \That \varphi \|_{(L_{33}L_{33}^\top)^{-1}}^2 = \frac{1}{N} \| \yh - \That \varphi\|_{\covpred^{-1}}^2
\]
since $L_{33} L_{33}^\top = Y(I-\Pi_\Phi)Y^\top = N\covpred$. Inserting these expressions back into \eqref{eq:gddpc} and making a change of variables shows the equivalence. These results are aligned with the results in \cite[Section 4]{breschi2023ifac}, and connected with the result in~\cite[Equation (15)]{kladtke2023implicit}. 

We now consider the RC-DeePC optimization problem \cite[Equation 33]{sader2023causality} with $\lambda=\mu$.  The main difference is to introduce $\gamma_2'$ and $L_{32}'$ while modifying $L_{32}$ so that (cf. \cite[Theorem 4]{sader2023causality})
\[
\yh = K^* \varphi + \begin{bmatrix} L_{32}' & L_{33} \end{bmatrix} \begin{bmatrix} \gamma_2' \\ \gamma_3 \end{bmatrix}
\]
where the causality preserving estimate $\That = K^*$ in \cite[Equation 17]{sader2023causality} is used. Following \cite[Remark 3]{sader2023causality}, we get
\[
\| \gamma_2'\|_2^2 + \|\gamma_3\|_2^2 = \frac{1}{N} \| \yh - \That \varphi\|_{\covpred^{-1}}^2.
\]
Inserting this back into \cite[Equation 33]{sader2023causality} the equivalence with \eqref{eq:result}, using $\That = K^*$, follows.

\end{document}